\newcommand\delay{\mathsf{d}}
\newcommand\feedback{\mathsf{Fbk}}
\newcommand\Circ{\mathit{Circ}}
\newcommand\Learner{\mathit{Learner}}
\author{Mario Román}
\date{\today}
\title{Comb Diagrams for Discrete-Time Feedback}
\begin{document}

\maketitle

\begin{abstract}
The data for many useful bidirectional constructions in applied category theory (optics, learners, games, quantum combs) can be expressed in terms of diagrams containing \emph{holes} or \emph{incomplete parts}, sometimes known as \emph{comb diagrams}. We give a possible formalization of what these circuits with \emph{incomplete parts} represent in terms of symmetric monoidal categories, using the dinaturality equivalence relations arising from a coend.  Our main idea is to extend this formal description to allow for infinite circuits with holes indexed by the natural numbers. We show how infinite combs over an arbitrary symmetric monoidal category form again a symmetric monoidal category where notions of \emph{delay} and \emph{feedback} can be considered.  The constructions presented here are still preliminary work.
\end{abstract}

\section{Introduction: finite combs}
\label{sec:orge507bd7}
The name ``comb diagram'' comes from the \emph{quantum combs} present on the work of Chiribella, D'Ariano and Perinotti \cite{chiribella08}, where they are defined as ``circuit boards in which one can insert variable subcircuits''. For our purposes, we will assume that each circuit \emph{board} can have multiple \emph{holes} waiting for the insertion of potentially different subcircuits. Consider the following circuit \cite[Figure 1]{chiribella08} and its adaptation.
\[
\includegraphics[scale=0.14]{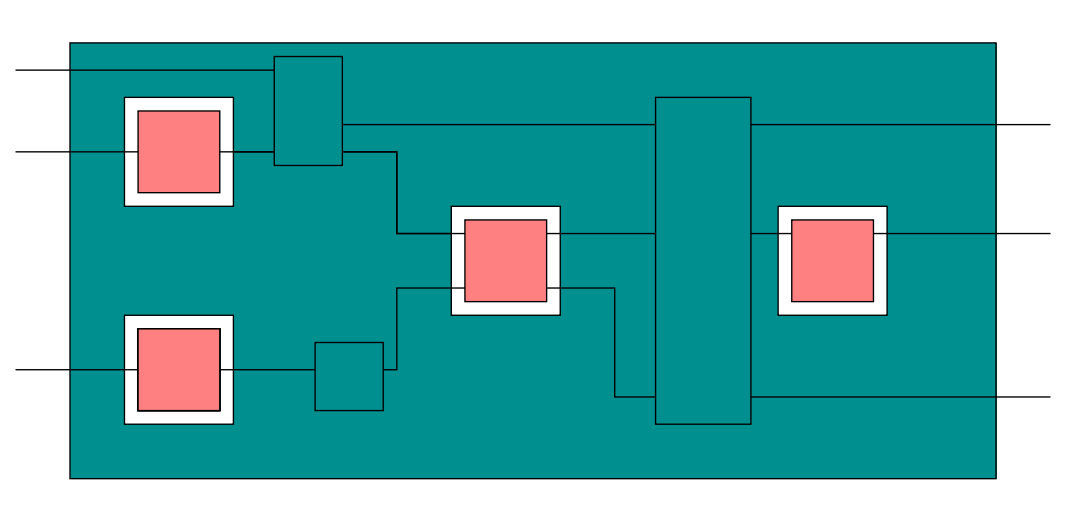}\qquad
\chiribellacircuit\]
Their original work does not mention category theory, but the category theorist will recognize this circuit as a diagram on a, possibly symmetric, monoidal category.  Rewording our goal, we want to consider \emph{holes} or \emph{incomplete parts} inside symmetric monoidal categories, while keeping a notion of equality between incomplete circuits that preserves the transformations of the graphical calculus of symmetric monoidal categories. A first simplification of the problem comes from the same article.
\begin{quote}
``\emph{It is clear that by reshuffling and stretching the internal wires any}
\emph{circuit board can be reshaped in the form of a ”comb”, with an ordered}
\emph{sequence of slots, each between two successive teeth, as in}
\emph{Fig. 3. The order of the slots is the causal order induced by the flow}
\emph{of quantum information in the circuit board.}'' -- \cite{chiribella08}
\end{quote}
Which can be translated for the category theorist as saying that we should use the symmetric structure to push the holes to the boundaries of the diagram where possible. The resulting diagram, adapted again for clarity, delineates a characteristic comb shape.
\[\chiribellasimplified{}\]
Our goal is to describe circuit boards like these. Assume we want to describe, for instance, the set of all the possible circuit boards with two holes.  That means we want to consider all possible triples of morphisms of the following shape.
\[\afourcomb{}\]
However, just defining combs to be a triple \((f,g,h)\) would miss the point, as it would not equate diagrams that are the same up to the usual transformations in a symmetric monoidal category. We need to keep track of the \emph{connecting wire} between morphisms; and we do this by explicitly quotienting out by an equivalence relation generated by all pairs of diagrams with the following shapes.
\begin{align*}
& \afourright \\
\sim & \\
& \afourcombleft
\end{align*}
The crucial step is to observe that these quotient relations can be rewritten in a compact way using the dinaturality conditions of a \emph{coend}, as defined for instance in \cite[\S IX.6]{maclane78}. The advantage of this description is that coends follow some practical rules of manipulation based on the Yoneda lemma; this is the \emph{coend calculus} described by \cite{loregian15}. We can regard the same triple, this time as an element of a set described as a coend.
\begin{align*}
[f,g,h] \in \int^{M_0,\dots,M_2} 
\C(X_0, M_0 \otimes Y_0) & \times 
\C(X_1 \otimes M_1, Y_1 \otimes M_2) \times \C(X_2 \otimes M_2, Y_2).
\end{align*}
Following this idea, we can propose a definition of comb in terms of coends.

\begin{definition}
\label{def:comb}
An n-\textbf{comb} between two families of objects \(X_0,\dots,X_n \in \C\) and \(Y_0,\dots,Y_n \in \C\)
is an element of the following set.
\[\Comb_n(X,Y) \coloneqq
\int^{M_0,\dots,M_{n-1}}
\prod_{i = 0}^{n}
\C(M_{i-1} \otimes X_i, M_i \otimes Y_i),
\mbox{ where } M_{-1},M_n \coloneqq I.\]
\end{definition}

\subsection{Contributions}
\label{sec:org9d0399e}

\begin{itemize}
\item A proposed definition of \(n\mbox{-comb}\) (Definition \ref{def:comb}) and \(\infty\mbox{-comb}\) (Definition \ref{def:omegacomb}) in terms of coends, as a way to model \emph{holes in monoidal categories}. The definition seems to correspond with the intuition in the literature and relate to other similar domain-specific definitions such as quantum combs (\S \ref{sec:causal}), lenses (\S \ref{sec:lenses}), and learners (\S \ref{sec:learners}). A discussion on the graphical representation of the quotienting of a coend in terms of monoidal categories, inspired by Riley's notation for optics \cite[\S 2]{riley18}.

\item A symmetric monoidal category of \emph{\(\infty\)-comb diagrams} (\S \ref{sec:categoryinftycombs}) over an arbitrary symmetric monoidal category, together with definitions for a \emph{delay} functor and a \emph{feedback} operator (\S \ref{sec:feedback}). We show that, in the cartesian case, this category particularizes into a Kleisli category for a suitable comonad (\S \ref{sec:cartesian}).
\end{itemize}

\section{Infinite combs}
\label{sec:orgc760b45}
\subsection{Motivation and examples}
\label{sec:org8ae4cef}
Morphisms in a monoidal category are usually interpreted as processes; and composition \((g \circ f)\) is interpreted as the process \(f\) happening strictly after the process \(g\).  Finite combs for a fixed \(n \in \bN\) are processes taking \(n\) inputs and producing \(n\) outputs, but they do so in a very specific order: they take the first input, \(X_0\), and produce the first output \(Y_0\), only after producing this first output they take the second input, \(X_1\) and produce the second output, \(Y_1\), only after producing this second output they take the third input, \(X_2\), and so on.  This distinguishes them from morphisms \(\otimes_{i=0}^n X_i \to \otimes_{i=0}^n Y_i\).

In practice, processes that alternatively take intputs and produce outptus do so in a indeterminate number of stages; they could even be processes that do not terminate.  From this perspective, finite combs feel too limited. The purpose of this section is to construct a category where morphisms are circuits that take inputs and produce outputs in a potentially infinite number of stages.   The category itself will turn out to be again a symmetric monoidal category that is suitable to define discrete dynamical systems. We anticipate it with two examples.

\begin{exampleth}[Fibonacci sequence]
\label{ex:fibonacci}
Consider the bialgebra of natural numbers \(\bN\) with copying \((\comultiplication) \colon \bN \to \bN \times \bN\), discarding \((\counit) \colon \bN \to 1\), addition \((\multiplication) \colon \bN \times \bN \to \bN\) and zero \((\unit)\colon 1 \to \bN\). This is the setting for \emph{Graphical linear algebra} \cite{bonchi:diagrammaticalgebra}.  The following diagram represents a morphism that computes the Fibonacci sequence.
\[\fibonacci
\]
There are many details to unpack here. This diagram is describing a state \(\Comb_{\infty}(1,\bN)\) in a category of \(\infty\)-combs. States \(\Comb_{\infty}(1,\bN)\) happen to correspond to infinite lists of elements of \(\bN\) (see \S \ref{sec:states}) and, in particular, this state corresponds to the Fibonacci sequence \([0,1,1,2,3,5,\dots]\).
The two markings for the initial values (\(0\) and \(1\)) are morphisms in the category of \(\infty\)-combs, and they are required to make compositions well-typed. The feedback operator is \emph{not} a trace, but we will describe it more carefully in \S \ref{sec:feedback}. Finally, the diagram can be \emph{unfolded} into the \(\infty\)-comb it represents.
\begin{align*}
& \unfoldfibonacci \\
= & \\
& \simplifyfibonacci \\
\end{align*}
\end{exampleth}

\begin{exampleth}[Probabilistic dynamical system]
\label{ex:dynamical}
The construction can be repeated in arbitrary symmetric monoidal categories that are not necessarily cartesian. Let \(D\) be the finite distribution monad. Consider a probabilistic discrete version of the Lotka-Volterra equations (also known as ``predator-prey'') with two initial populations of rabbits, \(r_0\), and foxes, \(f_0\).  These populations evolve in discrete time according to some probabilistic functions \(p \colon R \times F \to DR\) and \(q \colon R \times F \to DF\) that take both populations as inputs. The following is a valid diagram in the category of \(\infty\)-combs over the Kleisli category of the distribution monad, \(\operatorname{Kl}(D)\), describing how both populations interact over time.
\[\rabbits\]
The diagram is actually describing a state \(\Comb_{\infty}^{\operatorname{Kl}(D)}(1, R \times F)\), which happens to correspond to a coherent family of distributions, \(\varprojlim_n D(\prod^n_{i=0} R \times F)\). The unfolded comb is an \emph{incomplete circuit} in the Kleisli category of the distribution monad.
\[\unfoldrabbit\]
\end{exampleth}

\subsection{\(\infty\)-combs}
\label{sec:org1f9153a}
We will construct infinite combs as elements of the inverse limit of some incomplete n-combs. Let us first introduce some notation.  We use the \emph{ground} symbol (\(\ground\)) to denote the fact that we consider diagrams of that shape quotiented by the equivalence relation that disregards any morphism on that wire. For instance, the diagram
\[\begin{aligned}
\exampleground  &
\quad\mbox{denotes diagrams of the form} &
\examplenoground &
\quad\mbox{quotiented by} &
\examplewithbox \sim \examplenoground\ .
\end{aligned}\]
Secondly, we use holes (\(\hole{}\)) in our diagrams, as in the previous section. They are to be understood formally
as pairs of diagrams with some wires connected, quotiented by the equivalence relation that \emph{slides morphisms along wires}.  For instance, the diagram
\[\begin{aligned}
\examplecomb &
\quad\mbox{denotes a pair of morphisms} &&
\leftsquare \ \rightsquare \\
& \quad\mbox{quotiented by} &&
\examplewithbox \ \rightsquare \sim \leftsquare \ \examplewithboxright\ .
\end{aligned}\]
Formally, we shall make use of the dinaturality condition of a coend or a colimit. For instance,
\[\begin{aligned} 
\left( \combformally \right) &\coloneqq& [f] &\in \int^{M \in \C} \C(X , M \otimes Y); \\
\left( \holeformally \right) &\coloneqq& [f_0,f_1] &\in \int^{M \in \C} \C(X_0 , M \otimes Y_0) \otimes \C(M \otimes X_1, Y_1).
\end{aligned}\]
We will obtain our infinite comb diagrams as an inverse limit of finite diagrams. In
order to achieve this, we start by defining what a \emph{diagram until stage} \(n\) looks like for any
arbitrary \(n \in \bN\). We will call \(\Comb_n^{+}\) to the set of diagrams with this shape. They will be different
from the previously defined n-combs (Definition \ref{def:comb}) in that they will leave a wire \emph{open} to be connected to the
next stage. For instance, we
define
\begin{align*}    
\Comb_0^{ + } \coloneqq \left\{ \combone\right\},\ 
\Comb_1^{ + } \coloneqq \left\{ \combtwo\right\},\ 
\Comb_2^{ + } \coloneqq \left\{ \combthree\right\},  
\dots
\end{align*}
and we construct maps \(\Comb^{n+1} \to \Comb^{n}\) by projecting the \(n\) first components. Note
how the quotient conditions we defined previously are necessary to make this maps
well-defined. Formally, given two numerable families of objects \(X,Y \in [\bN,\C]\),
we are defining a set of \(n\mbox{-combs}^+\), with an open wire, for every \(n \in \bN\).
\[
\Comb_n^{+}(X,Y) \coloneqq
\int^{M_0,\dots,M_n}
\prod_{i = 0}^n
\C(M_{i-1} \otimes X_i, M_i \otimes Y_i),
\mbox{ where } M_{-1} \coloneqq I.
\]
Our definition of infinite comb, \(\Comb_{\infty}\), will be as the inverse limit of a chain
\[
\Comb_0^{ + } \gets \Comb_1^{ + } \gets \Comb_2^{ + } \gets \dots
\]
where the morphisms \(\Comb_{n+1}^{+} \to \Comb_n^{ + }\) are projections. Note the importance of
the quotienting to make these maps well-defined.

\begin{definition}
\label{def:omegacomb}
An \(\infty\)-comb between two families of objects \(X,Y \in [\bN, \C]\) is an element of the inverse limit
\[
\Comb_{\infty}(X,Y) \coloneqq
\varprojlim_n \int^{M_0,\dots,M_n}
\prod_{i = 0}^n
\C(M_{i-1} \otimes X_i, M_i \otimes Y_i),
\mbox{ where } M_{-1} \coloneqq I.\]
Combs in this sense are to be seen as sequences of morphisms \(f \coloneqq [f_0, f_1, \dots ]\) quotiented by an equivalence relation that equates
\begin{align*}
[ \dots, (m_{i-1} \otimes \id) \circ f_{i-1}, (m_i \otimes \id) \circ f_i, (m_{i+1} \otimes \id) \circ f_{i+1}, \dots] & \sim \\
[ \dots, f_{i-1} \circ (m_{i-2} \otimes \id), f_i \circ (m_{i-1} \otimes \id) , f_{i+1} \circ (m_i \otimes \id), \dots],
\end{align*}
for every family of \(m_1,m_2,\dots\) suitably typed in \(\C\). Let us introduce diagramatic notation for these morphisms.  A generic morphism \(f \in \Comb_{\infty}(X,Y)\) will be written from now on as the following diagram.
\[\genericmorphism\]
\end{definition}

\begin{remark}
\label{remark:semicartesian}
We could have also defined \(n\mbox{-combs}^+\) in terms of \(n\mbox{-combs}\) as
\[\Comb_n^{+}(\{X_0,\dots,X_n\},\left\{ Y_0,\dots,Y_n \right\}) \coloneqq \int^{M_n} \Comb_n(\{X_0,\dots,X_n\},\left\{ Y_0,\dots,Y_n \otimes M_n \right\}).
\]
In fact, when \(\C\) is a \emph{semicartesian category}, meaning that the monoidal unit is a
terminal object, both definitions coincide because of the Yoneda lemma.
\begin{align*}
& \int^{M_n} \Comb_n(\{X_0,\dots,X_n\},\left\{ Y_0,\dots,Y_n \otimes M_n \right\})\\
\cong & \quad \mbox{(The unit is terminal)} \\
& \int^{M_n} \Comb_n(\{X_0,\dots,X_n\},\left\{ Y_0,\dots,Y_n \otimes M_n \right\}) \times \C(M_n, I) \\
\cong & \quad \mbox{(Yoneda lemma)} \\
& \Comb_n(\{X_0,\dots,X_n\},\left\{ Y_0,\dots,Y_n \right\}).
\end{align*}
For semicartesian categories, the definition of \(\infty\)-combs can be rewritten as follows.
\[\Comb_{\infty}(X,Y) \coloneqq
\varprojlim_n \int^{M_0,\dots,M_{n-1}}
\prod_{i = 0}^{n-1}
\C(M_{i-1} \otimes X_i, M_i \otimes Y_i),
\mbox{ where } M_{-1},M_n \coloneqq I.\]
\end{remark}

\subsection{The symmetric monoidal category of \(\infty\)-combs}
\label{sec:org4bda100}
\label{sec:categoryinftycombs}
The category \([\bN,\C]\) is symmetric monoidal with the structure inherited from applying the monoidal product of \(\C\) pointwise. This, in turn, will induce a symmetric monoidal structure on \(\Comb_\infty\). Given two \(\infty\)-combs \(f \in \Comb_\infty(X, Y)\) and \(g \in \Comb_\infty(Y, Z)\), we can sequentially compose them into \((g \circ f) \in \Comb_\infty(X, Z)\), as in the following diagram.
\[\sequentialcomposition\]
Given two \(\infty\)-combs \(f \colon X \to Y\) and \(g \colon X' \to Y'\), we can compose them in parallel into a comb \((f \otimes g) \colon X \otimes X' \to Y \otimes Y'\), as in the following diagram.
\[\parallelcomposition{}\]
Moreover, we can lift a family of morphisms \(f_n \colon X_n \to Y_n\) to the following comb. This will later define an identity-on-objects functor \(i \colon [\bN,\C] \to \Comb_\infty\).
\[\inclusiondiagram\]

\begin{proposition}
The previous data determines a symmetric monoidal category \(\Comb_\infty\) with a strict monoidal identity-on-objects functor \([\bN,\C] \to \Comb_{\infty}\).
\end{proposition}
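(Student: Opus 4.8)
The plan is to build the symmetric monoidal structure one stage at a time and then pass to the limit. For each $n \in \bN$, let $\Comb_n^{+}$ be the category whose objects are \emph{all} families $X \in [\bN,\C]$ (only $X_0,\dots,X_n$ matter for the hom-set) and whose hom-sets are the coends $\Comb_n^{+}(X,Y)$ of the statement, with composition defined by \emph{interleaving}: given representatives $f_i \colon M_{i-1}\otimes X_i \to M_i \otimes Y_i$ and $g_i \colon N_{i-1}\otimes Y_i \to N_i \otimes Z_i$, the stage-$i$ morphism of $g \circ f$ is the evident map $N_{i-1}\otimes M_{i-1}\otimes X_i \to N_i \otimes M_i \otimes Z_i$ that applies $f_i$, braids to expose $N_{i-1}\otimes Y_i$, applies $g_i$, and braids again to collect the memory as $N_i\otimes M_i$. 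First one checks this descends to coend classes: a dinaturality move on $f$ or on $g$ (sliding an $m_i$ along its wire, as in Definition \ref{def:omegacomb}) induces a dinaturality move on the composite, by functoriality of $\otimes$ and naturality of the braiding of $\C$. Identities are the combs with every memory object $I$ and stage-$i$ morphism the canonical $I\otimes X_i \cong X_i \cong I \otimes X_i$; associativity and unitality of interleaving then reduce, wire by wire, to the coherence theorem for $\C$.

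Next I would put a monoidal product on each $\Comb_n^{+}$: on objects it is the pointwise product of $[\bN,\C]$, and on morphisms it \emph{stacks} combs, so that the memory of $f\otimes g$ at stage $i$ is $M_i\otimes M_i'$ and its stage-$i$ morphism is $f_i\otimes g_i$ conjugated by a braiding interleaving $X_i\otimes X_i'$ with $M_{i-1}\otimes M_{i-1}'$. As before, well-definedness on coend classes and the interchange law $(f'\otimes g')\circ(f\otimes g) = (f'\circ f)\otimes(g'\circ g)$ follow from the same braiding-and-interchange bookkeeping in $\C$. For the associator, unitors and symmetry of $\Comb_n^{+}$ I take the combs with trivial memory induced by the corresponding structure isomorphisms of $[\bN,\C]$; the pentagon, triangle and hexagon then hold because they hold in $[\bN,\C]$, hence in $\C$ pointwise. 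The one genuinely new point is naturality of these isomorphisms against \emph{arbitrary} (interleaved) comb morphisms, rather than just morphisms coming from $[\bN,\C]$; this comes down to a ``generalised interchange'' lemma, namely that sliding a braiding past a stack of interleaved stages acts trivially on coend classes, which is proved by induction on the number of stages using naturality of the braiding of $\C$.

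Finally, the projection $\Comb_{n+1}^{+}(X,Y)\to\Comb_n^{+}(X,Y)$ forgetting the last stage is identity-on-objects and strictly preserves composition, identities, $\otimes$ and all coherence data, because every operation above is defined \emph{stage-locally}, so truncation commutes with it. Thus we obtain a tower $\cdots \to \Comb_1^{+}\to\Comb_0^{+}$ of symmetric monoidal categories and strict symmetric monoidal functors, all with object set $[\bN,\C]$; since limits of symmetric monoidal categories along strict symmetric monoidal functors are computed levelwise, its limit has objects $[\bN,\C]$, hom-sets $\varprojlim_n\Comb_n^{+}(X,Y) = \Comb_\infty(X,Y)$, and a symmetric monoidal structure whose composition, product and coherence isomorphisms are the componentwise ones. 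The inclusions $[\bN,\C]\to\Comb_n^{+}$ sending a morphism family to the comb with all memories $I$ are compatible with the projections and strictly monoidal, so they induce the asserted strict monoidal identity-on-objects functor $[\bN,\C]\to\Comb_\infty$.

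I expect the main obstacle to be precisely the braiding bookkeeping: writing the interleaving composition and the stacking product carefully enough that they are manifestly well defined on the dinaturality quotients and that the naturality squares of the associator, unitors and symmetry can be verified; the ``generalised interchange'' lemma of the second paragraph is the technical heart. Everything else is coherence for $\C$ together with the routine fact that cofiltered limits of symmetric monoidal categories along strict symmetric monoidal functors are created by the forgetful functor to $\mathbf{Cat}$.
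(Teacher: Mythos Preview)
Your proposal is correct and, in fact, considerably more careful than the paper's own argument, but it is organised differently. The paper works directly at the level of $\Comb_\infty$: it defines composition and the monoidal product on representatives of infinite combs, verifies associativity by a single graphical-calculus equality, and then lifts the identity, unitors, associators and symmetry from $[\bN,\C]$ along the evident identity-on-objects inclusion, declaring the remaining coherence checks ``straightforward in the graphical calculus''. There is no explicit passage through finite stages; the inverse limit is used only to \emph{define} the hom-sets, not to transport structure.

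Your route instead promotes each $\Comb_n^{+}$ to a symmetric monoidal category, checks that the truncation maps are strict symmetric monoidal because every operation is stage-local, and then invokes the fact that cofiltered limits in $\mathbf{SymMonCat}_{\mathrm{strict}}$ are created in $\mathbf{Cat}$. This buys you two things the paper leaves implicit: first, well-definedness on dinaturality classes is isolated cleanly at each finite stage (where it is a genuine statement about finite coends rather than about limits of quotients); second, the coherence axioms and the naturality of the structural isomorphisms against arbitrary comb morphisms --- your ``generalised interchange'' lemma --- are flagged as the only non-formal content, whereas the paper absorbs them into the phrase ``straightforward in the graphical calculus''. The price is some overhead in setting up the tower and the limit-creation fact; the paper's diagrammatic shortcut is quicker to write but leaves exactly the points you highlight unexamined.
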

\begin{proof}
Let us start by showing that the sequential composition previously defined
is indeed associative.  In fact, the following two diagrams represent the 
same comb.
\begin{align*}
& \assocone \\
= & \\
& \assoctwo
\end{align*}
The identity \(\infty\)-comb can be lifted from the identity in \([\bN,\C]\), and it can be checked to be the unit of composition. The same can be done  with the unitors and associators, as the monoidal product coincides on objects; checking that they satisfy the required axioms is straigthforward in the graphical calculus.
\end{proof}

\subsection{Delay and feedback}
\label{sec:orgefae7a6}
\label{sec:feedback}
There exists a fully-faithful and strong monoidal functor \(\delay \colon [\bN,\C] \to [\bN,\C]\) that \emph{shifts by one} every sequence of objects, defined as \(\delay(X)_n \coloneqq X_{n+1}\) for \(X \in [\bN,\C]\).  The lifting of this functor to the category of combs is what we will call the \emph{delay functor} \(\delay \colon \Comb_{\infty} \to \Comb_{\infty}\). Given any \(f \in \Comb_\infty(X,Y)\), we can define \(\delay f \in \Comb_\infty(\delay X, \delay Y)\) as in the following diagram, making the first morphism be the empty diagram. This assignment can be shown to be functorial.
\[\delayedcomb\]

\begin{definition}
The \emph{feedback} operator \(\feedback^X \colon \Comb_\infty(\delay X \otimes A , X \otimes B) \to \Comb_\infty(A,B)\) is defined as sending the following generic \(\infty\)-comb \(f \in \Comb_\infty(\delay X \otimes A , X \otimes B)\),
\[\beforefeedback\]
to the following \(\infty\)-comb \(\feedback^X(f) \in \Comb_\infty(A,B)\).
\[\afterfeedback\]
This feedback operator enjoys a trace-like property, in the sense that for every \(f \in \Comb_\infty(\delay Y \otimes A , X \otimes B)\) and \(g \in \Comb_\infty(X,Y)\), it holds that \(\feedback^Y((g \otimes \id) \circ f) = \feedback^X(f \circ (\delay g \otimes \id))\). After
applying convenient swappings of the wires, proving this equation amounts to check 
that the morphisms representing \(g\) can be slided past the holes.
\[\tracelikediagram\]
However, the absence of the yanking equation and the requirement for the start of the trace to be on the image of the delay functor clash with the axioms of a trace. We employ a graphical calculus similar to the one for spherical traced categories \cite[\S 4.5.3]{selinger10} in the examples, with the important caveat that the type of the feedback operator (\(\tinyfeedback\)) does not coincide with that of the trace, and considering that it does not satisfy the same graphical equations of a trace. We use this calculus in Examples \ref{ex:fibonacci} and \ref{ex:dynamical}.
\[ \ffeedback\ \coloneqq\ \feedback(f){{{{}}}}\]
Apart from naturality, that can be stated graphically in the same way as for traces; the property we have shown before amounts to the following graphical equation.
\[ \tracelikeleft =  \tracelikeright \]
\end{definition}

\subsection{States}
\label{sec:org802f69c}
\label{sec:states}
Let us describe what states \(\Comb_{\infty}(I,Y)\) are in the category of \(\infty\)-combs. Note first that the monoidal unit on the category of combs is the monoidal unit on natural-number-indexed objects \(I \in [\bN,\C]\),  which is in turn the constant family of objects given by the unit.

\begin{proposition}
For \(\C\) an arbitrary symmetric monoidal category, \(\Comb_n(I,Y) \cong \C\left( I , \otimes_{i=0}^{n} Y_i  \right)\), and
\[ \Comb_n^{+}(I,Y) \cong \int^{M} \C\left( I , (\otimes_{i=0}^{n} Y_i) \otimes M  \right).\]
As a consequence, we have a description of states in \(\Comb_{\infty}\).
\[ \Comb_\infty(I,Y) \cong \varprojlim_n \int^M \C\left( I , (\otimes_{i=0}^{n} Y_i) \otimes M \right).\]
\end{proposition}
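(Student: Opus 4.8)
The plan is to push everything through by coend calculus: every coend that shows up can be collapsed by the Yoneda lemma in its pointwise (density) form, so no (co)completeness hypothesis on $\C$ is needed beyond being symmetric monoidal. I would handle the three displayed isomorphisms in turn.

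For the first isomorphism, substitute $X_i \coloneqq I$ in Definition \ref{def:comb}; using the unitor, $\Comb_n(I,Y)\cong\int^{M_0,\dots,M_{n-1}}\prod_{i=0}^n\C(M_{i-1},M_i\otimes Y_i)$ with $M_{-1}=M_n=I$. Then I would eliminate the coend variables $M_0,M_1,\dots,M_{n-1}$ one at a time by induction, maintaining the invariant that after eliminating $M_0,\dots,M_{i-1}$ the expression is $\int^{M_i,\dots,M_{n-1}}\C\big(I,M_i\otimes(Y_0\otimes\dots\otimes Y_i)\big)\times\prod_{j=i+1}^{n}\C(M_{j-1},M_j\otimes Y_j)$. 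In this form $M_i$ occurs covariantly in exactly one factor (the ``accumulated state'' $\C(I,M_i\otimes(Y_0\otimes\dots\otimes Y_i))$) and contravariantly in exactly one factor, $\C(M_i,M_{i+1}\otimes Y_{i+1})$, so the density formula $\int^{M}\C(M,B)\times H(M)\cong H(B)$ contracts the two into $\C\big(I,(M_{i+1}\otimes Y_{i+1})\otimes(Y_0\otimes\dots\otimes Y_i)\big)\cong\C\big(I,M_{i+1}\otimes(Y_0\otimes\dots\otimes Y_{i+1})\big)$, the last step being an instance of the associators and symmetry of $\C$. Contracting finally against $\C(M_{n-1},Y_n)$ (recall $M_n=I$) leaves exactly $\C(I,Y_0\otimes\dots\otimes Y_n)$; the base case $n=0$ is $\C(I\otimes I,I\otimes Y_0)\cong\C(I,Y_0)$.

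For the second isomorphism I would rerun the very same elimination on $\Comb_n^{+}(I,Y)\cong\int^{M_0,\dots,M_n}\prod_{i=0}^n\C(M_{i-1},M_i\otimes Y_i)$ (now only $M_{-1}=I$): the contractions of $M_0,\dots,M_{n-1}$ are verbatim as above, the only change being that the last factor is $\C(M_{n-1},M_n\otimes Y_n)$ rather than $\C(M_{n-1},Y_n)$, so contracting $M_{n-1}$ leaves $\C\big(I,(M_n\otimes Y_n)\otimes(Y_0\otimes\dots\otimes Y_{n-1})\big)\cong\C\big(I,(\otimes_{i=0}^n Y_i)\otimes M_n\big)$ still under $\int^{M_n}$; renaming $M_n$ to $M$ gives the claim. (Equivalently this is the first isomorphism fed through the reformulation $\Comb_n^{+}(X,Y)\cong\int^{M_n}\Comb_n(\{X_i\},\{Y_0,\dots,Y_n\otimes M_n\})$ of Remark \ref{remark:semicartesian}, which holds over an arbitrary $\C$ since it only invokes the symmetry $M_n\otimes Y_n\cong Y_n\otimes M_n$.)

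For the $\infty$-comb statement, recall from Definition \ref{def:omegacomb} that $\Comb_\infty(I,Y)=\varprojlim_n\Comb_n^{+}(I,Y)$; since a levelwise natural isomorphism between two diagrams of the same shape induces an isomorphism of their limits, it suffices to check that the isomorphisms just established intertwine the bonding maps of the two inverse systems. The bonding map $\Comb_{n+1}^{+}\to\Comb_n^{+}$ forgets the last tooth $f_{n+1}$; transported along the reductions it should become the map $\int^{M}\C\big(I,(\otimes_{i=0}^{n+1}Y_i)\otimes M\big)\to\int^{M}\C\big(I,(\otimes_{i=0}^{n}Y_i)\otimes M\big)$ induced by reindexing $M\rightsquigarrow Y_{n+1}\otimes M$ (reading the wire $Y_{n+1}\otimes M$ as the new open state), which is exactly what forgetting a tooth does on states. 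Verifying that these squares commute — that truncating the Yoneda bookkeeping of $\Comb_{n+1}^{+}$ at level $n$ agrees with that of $\Comb_n^{+}$ post-composed with this reindexing — is the one genuinely delicate point, and I expect it to be the main obstacle; everything else is routine Yoneda-reduction bookkeeping, with the permutations of the $Y_i$ absorbed silently by the coherence isomorphisms. Granting that compatibility, $\Comb_\infty(I,Y)\cong\varprojlim_n\int^{M}\C\big(I,(\otimes_{i=0}^{n}Y_i)\otimes M\big)$.
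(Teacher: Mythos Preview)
Your argument is correct and follows the same Yoneda-reduction strategy as the paper, with one cosmetic difference: you eliminate the coend variables \emph{from the bottom up} (contracting $M_0$, then $M_1$, \dots, accumulating the $Y_i$ into a growing state $\C(I,M_i\otimes(Y_0\otimes\cdots\otimes Y_i))$), whereas the paper peels off the \emph{last} tooth at each step, writing $\Comb_{n+1}(I,Y)\cong\int^{M_n}\Comb_n(I,\{Y_0,\dots,Y_n\otimes M_n\})\times\C(M_n,Y_{n+1})$ and invoking the induction hypothesis on the inner $\Comb_n$. Both reductions are the same Yoneda contraction, just run in opposite directions. Your treatment of $\Comb_n^{+}$ via the reformulation from Remark~\ref{remark:semicartesian} is exactly what the paper does in its final line. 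One point where you are more careful than the paper: you flag the need to check that the levelwise isomorphisms commute with the bonding maps before passing to the inverse limit; the paper simply asserts the consequence without comment, so your caution there is warranted rather than a gap.
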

\begin{proof}
We shall apply induction again. In the case, \(n=0\) both sides of the isomorphism are equal. In the case \(n+1\) we can see that
\begin{align*}
& \Comb_{n+1}(I,\left\{ Y_0,\dots,Y_{n+1} \right\}) \\
\cong & \quad \mbox{(Definition)} \\
& \int^{M_{n}} \Comb_{n}(I,\left\{ Y_0,\dots,Y_{n} \otimes M_n \right\}) \times \C(M_n , Y_{n+1})\\
\cong & \quad \mbox{(Induction hypothesis)} \\
& \int^{M_n} \C\left( I , (\otimes_{i=0}^{n} Y_i) \otimes M_n  \right) \times \C(M_n , Y_{n+1}) \\
\cong & \quad \mbox{(Yoneda)} \\
& \C\left( I , \otimes_{i=0}^{n+1} Y_i  \right).
\end{align*}
Finally, note that \(\Comb_n^{+}(I,Y) \cong \int^{M_{n}}\Comb_n(I,\left\{ Y_0,\dots, Y_n \otimes M_n \right\})\).
\end{proof}

\begin{remark}
The description of \(\Comb_\infty(I,Y)\) can be made more concrete in the case where the category \(\C\) is semicartesian. In this case, n-combs coincide with n-combs\(^{\text{+}}\), see Remark \ref{remark:semicartesian}, which makes
\[ \Comb_\infty(I,Y) \cong \varprojlim_n \C\left( I , \otimes_{i=0}^{n} Y_i \right).\]
Finally, when the semicartesian category has the required limit, these states in the category of combs can be rewritten simply as states of type \(\varprojlim_n \otimes_{i=0}^{n} Y_i\).
\end{remark}

\section{Cartesian infinite combs}
\label{sec:org3d85858}
\label{sec:cartesian}
Cartesian \(\infty\)-combs are interesting because of their simplified structure, which helps intuition with their monoidal counterparts.  We will characterize \(\infty\)-combs in a cartesian category as Kleisli morphisms for a comonad. Let us first characterize finite cartesian combs.

\begin{lemma}
\label{lemma:cartesianncomb}
Let \(\C\) be a cartesian monoidal category. Let \(X, Y \in \C\).
\[
\Comb^{+}_n(X,Y) \cong \Comb_n(X,Y) \cong \prod_{i=0}^{n-1} \C(X_0 \times \dots \times X_i,Y_i).
\]
\end{lemma}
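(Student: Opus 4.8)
The plan is to prove both isomorphisms by induction on $n$, in the style of the proof of the states proposition above; the only genuinely cartesian ingredient will be the natural isomorphism $\C(A, B \times C) \cong \C(A, B) \times \C(A, C)$.

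The first isomorphism $\Comb^{+}_n(X,Y) \cong \Comb_n(X,Y)$ requires no new work: a cartesian monoidal category is in particular semicartesian, since its monoidal unit is terminal, so this is exactly Remark \ref{remark:semicartesian}. It therefore suffices to establish $\Comb_n(X,Y) \cong \prod_{i=0}^{n-1}\C(X_0 \times \dots \times X_i, Y_i)$, and I would do this by induction on $n$, peeling off the last stage of the comb; the base case is immediate from the unitors.

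For the inductive step I would first regroup the coend defining $\Comb_n$ by isolating its last bound variable $M$, by the same manoeuvre used in the proof of the states proposition, to obtain
\[ \Comb_{n}(X,Y)\;\cong\;\int^{M}\Comb_{n-1}\bigl(\{X_0,\dots,X_{n-2}\},\{Y_0,\dots,Y_{n-3},\,Y_{n-2}\otimes M\}\bigr)\times\C(M\otimes X_{n-1},Y_{n-1}). \]
Applying the induction hypothesis to the inner $\Comb_{n-1}$, and then the cartesian hom-splitting to rewrite its last factor $\C(A,\,Y_{n-2}\otimes M)$ as $\C(A,Y_{n-2})\times\C(A,M)$ with $A = X_0\times\dots\times X_{n-2}$, turns the integrand into a product of hom-sets, exactly two of which depend on $M$; pulling the remaining factors out of the coend leaves $\int^{M}\C(A,M)\times\C(M\otimes X_{n-1},Y_{n-1})$.

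This residual coend is collapsed by the ninja Yoneda (density) lemma applied to the presheaf $M\mapsto\C(M\otimes X_{n-1},Y_{n-1})$, which yields $\C(A\otimes X_{n-1},Y_{n-1}) = \C(X_0\times\dots\times X_{n-1},Y_{n-1})$; concretely the isomorphism sends a class $[h,f]$ to $f\circ(h\otimes\id)$. Collecting all the factors then gives $\Comb_n(X,Y)\cong\prod_{i=0}^{n-1}\C(X_0\times\dots\times X_i,Y_i)$, closing the induction. I expect the main obstacle to be bookkeeping rather than mathematics: keeping straight which coend variable is bound and which is free when passing between $\Comb_n$, $\Comb^{+}_{n-1}$ and $\Comb_{n-1}$ with its final output object twisted by $\otimes M$, and applying the Yoneda step in the correct variance — as the density formula for a presheaf, not the co-density formula. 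It is also worth flagging that cartesianness enters only through the hom-splitting $\C(A,B\times C)\cong\C(A,B)\times\C(A,C)$, which fails in a general semicartesian category; this is precisely why $\infty$-combs over non-cartesian categories do not collapse to such an elementary product.
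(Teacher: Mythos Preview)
Your proposal is correct and uses the same ingredients as the paper (induction on the length of the comb, the cartesian hom-splitting $\C(A,B\times C)\cong\C(A,B)\times\C(A,C)$, and the co-Yoneda reduction). The only organizational differences are that the paper peels off the \emph{first} coend variable $M_0$ rather than the last, and runs the induction for $\Comb_n^{+}$ and $\Comb_n$ in parallel rather than first invoking Remark~\ref{remark:semicartesian}; neither choice affects the argument.
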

\begin{proof}
Recall the definition of \(\Comb_n^{+}\). The following isomorphism follows from continuity of the hom-functor.
\begin{align*}
\Comb_{n+1}^{+}(X,Y)
\cong
\int^{M_0 \in \C} \C(X_0,Y_0 \times M_0) \times 
\Comb_n^{+}(\left\{ M_0 \times X_1, X_2, \dots \right\}, \left\{ Y_1, Y_2,\dots \right\}).
\end{align*}
A similar isomorphism can be shown for \(\Comb_{n+1}\). The rest of the proof is a straightforward application of induction over the length of the comb and the Yoneda lemma. For the case \(n=0\), we have the following isomorphism because of the cartesian structure and the Yoneda lemma.
\begin{align*}
\int^{M_0} \C(X_0, M_0 \times Y_0) \cong
\int^{M_0} \C(X_0, M_0) \times \C(X_0, Y_0) \cong \C(X_0,Y_0).
\end{align*}
Finally, for the case \(n+1\), the induction hypothesis can be used in conjunction with the previous observation
to prove an isomorphism.
\begin{align*}
& \int^{M_0,\dots,M_{n}} \prod_{i=0}^{n} \C(M_{i-1} \times X_i, M_i \times Y_i) \\
\cong & \quad\mbox{(Previous observation)} \\
& \int^{M_0 \in \C} \C(X_0,Y_0 \times M_0) \times 
\Comb^{n}_{\C}(\left\{ M_0 \times X_1, X_2, \dots \right\}, \left\{ Y_1, Y_2,\dots \right\}) \\
\cong & \quad\mbox{(Induction hypothesis)} \\
& \int^{M_0 \in \C} \C(X_0,Y_0 \times M_0) \times 
\prod_{i=1}^{n} \C(M_0 \times X_1 \times \dots \times X_i,Y_i) \\
\cong & \quad\mbox{(Products split)} \\ 
&\int^{M_0 \in \C} \C(X_0,M_0) \times \C(X_0, Y_0) \times 
\prod_{i=1}^{n} \C(M_0 \times X_1 \times \dots \times X_i,Y_i) \\
\cong & \quad\mbox{(Yoneda lemma)} \\ 
& \C(X_0, Y_0) \times 
\prod_{i=1}^{n} \C(X_0 \times X_1 \times \dots \times X_i,Y_i). & \qedhere
\end{align*}
\end{proof}

\begin{proposition}
Let \(\C\) be a cartesian category. We can characterize \(\infty\)-combs in \(\C\) as
\[
\Comb_\infty(X,Y) \cong \prod_{n = 0}^{\infty} \C(X_0 \times \dots \times X_n , Y_n).
\]
\end{proposition}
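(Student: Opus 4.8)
The plan is to reduce the infinite statement to the finite one proved in Lemma~\ref{lemma:cartesianncomb}, and then compute the inverse limit explicitly. Recall that $\Comb_\infty(X,Y)$ is defined as $\varprojlim_n \Comb_n^{+}(X,Y)$ with the connecting maps being the projections onto the first $n$ components. By Lemma~\ref{lemma:cartesianncomb} we have, for each $n$, a natural isomorphism $\Comb_n^{+}(X,Y) \cong \prod_{i=0}^{n-1}\C(X_0\times\dots\times X_i, Y_i)$. So the first step is to check that these isomorphisms are compatible with the connecting maps of the two inverse systems: on the comb side the map $\Comb_{n+1}^{+}\to\Comb_n^{+}$ forgets the last tooth, and on the product side the corresponding map $\prod_{i=0}^{n}\C(X_0\times\dots\times X_i,Y_i)\to\prod_{i=0}^{n-1}\C(X_0\times\dots\times X_i,Y_i)$ is the evident projection forgetting the last factor. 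Tracing through the inductive proof of the lemma, the isomorphism built there is precisely the one that peels off teeth one at a time, so this compatibility is essentially immediate — it is the naturality/functoriality that the inductive construction already makes manifest.

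Once the two inverse systems are identified, it remains to observe that $\varprojlim_n \prod_{i=0}^{n-1}\C(X_0\times\dots\times X_i,Y_i) \cong \prod_{n=0}^{\infty}\C(X_0\times\dots\times X_n,Y_n)$. This is a standard fact: an inverse limit of finite products along the projections that forget trailing factors is the full (countable) product. Concretely, an element of the limit is a coherent sequence $(a^{(n)})_n$ with $a^{(n)} \in \prod_{i=0}^{n-1}\C(X_0\times\dots\times X_i,Y_i)$ and $a^{(n+1)}$ restricting to $a^{(n)}$; coherence forces the $i$-th component of $a^{(n)}$ to be independent of $n$ for all $n > i$, so the data is exactly a choice of one morphism $X_0\times\dots\times X_i\to Y_i$ for each $i\in\bN$, i.e. an element of $\prod_{i=0}^\infty\C(X_0\times\dots\times X_i,Y_i)$; conversely any such family defines a coherent sequence by truncation. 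This bijection is natural, and it respects whatever additional structure (it is a set-level statement here, so nothing more is needed).

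I do not expect a serious obstacle: the content is all in Lemma~\ref{lemma:cartesianncomb}, and the remaining work is the routine identification of a $\varprojlim$ of truncated products with the full product. The one point that deserves a sentence of care is the compatibility of the lemma's isomorphisms with the connecting maps — one should either note that the isomorphism of Lemma~\ref{lemma:cartesianncomb} is natural in the family of objects and built teeth-first (so compatibility is built in), or, to be safe, exhibit explicitly that forgetting the last tooth of an $n{+}1$-comb corresponds under the isomorphism to dropping the last factor $\C(X_0\times\dots\times X_n,Y_n)$, which one reads off from the final Yoneda step of that proof. With that in hand the proposition follows by interchanging the limit with the stabilising product:
\begin{align*}
\Comb_\infty(X,Y)
&\cong \varprojlim_n \Comb_n^{+}(X,Y) \\
&\cong \varprojlim_n \prod_{i=0}^{n-1} \C(X_0\times\dots\times X_i, Y_i) \\
&\cong \prod_{n=0}^{\infty} \C(X_0\times\dots\times X_n, Y_n). & \qedhere
\end{align*}
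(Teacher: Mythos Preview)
Your argument is correct and follows the same route as the paper: apply Lemma~\ref{lemma:cartesianncomb} to identify each $\Comb_n^{+}(X,Y)$ with a finite product, then observe that the inverse limit of these truncated products along the projections is the countable product. You are in fact slightly more careful than the paper in spelling out the compatibility of the lemma's isomorphisms with the connecting maps, which the paper leaves implicit.
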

\begin{proof}
After the application of Lemma \ref{lemma:cartesianncomb}, we only need to observe that the inverse limit of the following diagram is the desired product with the comb maps coinciding with the projections.
\[\begin{tikzcd}[column sep=small]
\C(X_0,Y_0) & 
{ \prod_{i=0}^1 \C(X_0\times \dots \times X_i,Y_i)} \lar &
{ \prod_{i=0}^2 \C(X_0\times \dots \times X_i,Y_i)} \lar &
\dots \lar 
\end{tikzcd}\qedhere\]
\end{proof}

\begin{remark}
After this characterization, it is straightforward to show that, for \(\C\) a cartesian category, \(\Comb_\infty\) is equivalent to the Kleisli category for a comonad \(\Theta \colon [\bN,\C] \to [\bN,\C]\) defined on objects as \(\Theta(X)_n \coloneqq X_0 \times \dots \times X_n\).
\end{remark}

\section{Related work}
\label{sec:org7c93fb8}
\subsection{Feedback, trace, and fixed-point semantics}
\label{sec:orgda257ea}
\label{sec:katis}
After writing this text, the author found a remarkable similarity between the ideas of \emph{delay} and \emph{feedback} explained here and the informative work of Katis, Sabadini and Walters on feedback and trace \cite{katis:feedback}.  Even the terminology coincides quite closely.  It seems plausible that we can link our construction to theirs, and that \(\infty\)-combs could be made a concrete example of a \emph{category with feedback} as defined there. However, a direct attempt will not work because of the type of our feedback operator, that requires a delay on the domain. Particularly relevant for us is also their \(\Circ(\C)\) construction, which is almost the data for a piece of an \(\infty\)-comb.

\begin{definition}
\cite[Definition 2.4]{katis:feedback} Let \(\C\) be a monoidal category. For any two objects \(X,Y \in \C\), we define
\[
\Circ_{\C}(X,Y) \coloneqq \int^{M \in \mathsf{Core}(\C)} \C(M \otimes X, M \otimes Y).
\]
\end{definition}

We can give \(\Circ_{\C}\) category structure. The requirement for the coend to be taken over \(\mathsf{Core}(\C)\), the maximal subgrupoid of \(\C\), instead of \(\C\), distinguishes this category from what we would have defined, by analogy, to be a piece of a comb.

\subsection{Quantum causal structures}
\label{sec:org34ac943}
\label{sec:causal}
Our original source of inspiration was the categorical treatment of combs of Kissinger and Uijlen \cite{uijlen17}, where they refer to \cite{chiribella08}.  However, our usage of combs, and our definition, seem slightly different.  In order to compare them, we can study two particular cases.

\begin{itemize}
\item \emph{1-Combs} in compact closed category \(\C\) are four-partite states, as in \cite[\S 2.1]{uijlen17}.  Fixing \(A,B,C,D \in \C\), we can compute
\begin{align*}
& \int^{M \in \C} \C(A, M \otimes B) \times \C(M \otimes C, D) \\
\cong &\quad\mbox{(Dual of $C$)} \\
& \int^{M \in \C} \C(A, B \otimes M) \times \C(M, C^{\ast} \otimes D) \\
\cong &\quad\mbox{(Yoneda lemma)} \\
& \C(A,  B \otimes C^{\ast} \otimes D) \\
\cong &\quad\mbox{(Dual of $A$)} \\
& \C(I, A^{\ast} \otimes B \otimes C^{\ast} \otimes D).
\end{align*}

\item The data for a comb (as in Definition \ref{def:comb}) in \((\C,\otimes,I,\multimap)\) a symmetric monoidal closed category does \emph{not} coincide with the definition of states typed by a comb in \cite[Definition 6.6]{uijlen17}.  This can be explained by the fact that combs as in \cite[Definition 6.6]{uijlen17} are just notation for morphisms in a \emph{precausal} category.

\begin{proposition}
Let \(\C\) be symmetric monoidal closed.
\[
   \Comb_n(X,Y) \cong \C(I,X_0 \multimap (X_1 \multimap \dots (X_{n-1} \multimap (X_n \multimap Y_n) \otimes Y_{n-1}) \otimes Y_{n-2} \dots ) \otimes Y_0).
   \]
\end{proposition}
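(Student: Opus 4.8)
The plan is to prove the isomorphism by induction on $n$, eliminating the coend variables $M_0,\dots,M_{n-1}$ one at a time, starting from $M_{n-1}$, by means of the density (co-Yoneda) formula of the coend calculus. Each elimination of one variable produces exactly one more layer of the nested $\multimap$/$\otimes$ expression on the right-hand side. This is essentially the same bookkeeping as in the proof of Lemma~\ref{lemma:cartesianncomb} and of the proposition on states in \S\ref{sec:states}, except that the cartesian or terminal simplifications used there are now replaced by currying along $\multimap$.

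For the base case $n=0$, Definition~\ref{def:comb} gives $\Comb_0(X,Y) = \C(M_{-1}\otimes X_0, M_0\otimes Y_0) = \C(X_0, Y_0)$ since $M_{-1} = M_0 = I$; the left unitor followed by currying identifies this with $\C(I, X_0\multimap Y_0)$, which is the asserted formula at $n=0$.

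For the inductive step I would first isolate the coend over $M_{n-1}$. The final factor $\C(M_{n-1}\otimes X_n, M_n\otimes Y_n)$ equals $\C(M_{n-1}\otimes X_n, Y_n)$ because $M_n = I$, and currying rewrites it as $\C(M_{n-1}, X_n\multimap Y_n)$. Now $\C(M_{n-2}\otimes X_{n-1}, M_{n-1}\otimes Y_{n-1})$ is covariant in $M_{n-1}$ while $\C(M_{n-1}, X_n\multimap Y_n)$ is contravariant in $M_{n-1}$, so $\int^{M_{n-1}}$ of their product has the correct mixed-variance shape and the density formula collapses it:
\[
\int^{M_{n-1}} \C(M_{n-2}\otimes X_{n-1}, M_{n-1}\otimes Y_{n-1}) \times \C(M_{n-1}, X_n\multimap Y_n) \;\cong\; \C\bigl(M_{n-2}\otimes X_{n-1},\, (X_n\multimap Y_n)\otimes Y_{n-1}\bigr).
\]
What is left after this substitution is, by inspection, precisely $\Comb_{n-1}(X',Y')$ for the shortened families $X' = (X_0,\dots,X_{n-1})$ and $Y' = \bigl(Y_0,\dots,Y_{n-2},\, (X_n\multimap Y_n)\otimes Y_{n-1}\bigr)$. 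Applying the induction hypothesis to $\Comb_{n-1}(X',Y')$ and substituting $Y'_{n-1} = (X_n\multimap Y_n)\otimes Y_{n-1}$ back into its nested $\multimap$/$\otimes$ expression produces exactly the claimed formula for $\Comb_n(X,Y)$.

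The steps that require (routine) care are the interchange of coends that lets us evaluate $\int^{M_{n-1}}$ first, and the insertion of the coherence isomorphisms for $\otimes$, $I$ and $\multimap$; these are handled as in the earlier proofs. The only mildly delicate point — the one I would watch most closely — is tracking the variances so that at each stage the coend being collapsed really does have the form $\int^{M}\C(M,c)\times F(M)$ with $F$ covariant; once the pattern ``simplify and curry the last hom-set, then apply co-Yoneda against the second-to-last hom-set'' is set up, the induction runs automatically.
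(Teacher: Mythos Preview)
Your proposal is correct and follows essentially the same route as the paper: induction on $n$, with the base case $\C(X_0,Y_0)\cong\C(I,X_0\multimap Y_0)$ and the inductive step obtained by currying the last hom-factor and applying co-Yoneda to substitute $M_{n-1}\mapsto X_n\multimap Y_n$. The only cosmetic difference is the order of operations: the paper first rewrites $\Comb_{n+1}$ as $\int^{M_n}\Comb_n(\dots,M_n\otimes Y_n)\times\C(M_n,X_{n+1}\multimap Y_{n+1})$, applies the induction hypothesis, and then uses Yoneda, whereas you collapse the last coend first and invoke the hypothesis afterwards; the two orderings are interchangeable.
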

\begin{proof}
We proceed by induction. For \(n=0\), we have \(\C(X_0,Y_0) \cong \C(I,X_0 \multimap Y_0)\). For the case \(n+1\),
we note that
\begin{align*}
& \Comb_{n+1}(X,Y) \\
\cong & \quad\mbox{(Continuity of the hom functor)} \\
& \int^{M_n} \Comb_{n}(\left\{ X_0,\dots,X_n \right\},\left\{ Y_0,\dots,Y_{n-1},M_{n} \otimes Y_n \right\}) \times \C(M_n \otimes X_{n+1}, Y_{n+1})\\
\cong & \quad\mbox{(Induction hypothesis)} \\
& \C(I,X_0 \multimap (X_1 \multimap \dots (X_n \multimap M_{n} \otimes Y_n) \otimes Y_{n-2} \dots ) \otimes Y_0) \times \C(M_n, X_{n+1}\multimap Y_{n+1})\\
\cong & \quad\mbox{(Yoneda lemma)} \\
& \C(I,X_0 \multimap (X_1 \multimap \dots (X_n \multimap (X_{n+1}\multimap Y_{n+1}) \otimes Y_n) \otimes Y_{n-2} \dots ) \otimes Y_0). \qquad \qedhere
\end{align*}
\end{proof}
\end{itemize}

This author, however, does not feel qualified to evaluate how the current construction relates or if it can be of any use to causal structures and prefers to refer the reader to the extensive work of Kissinger and Uijlen \cite{uijlen17} for a categorical treatment of these \emph{quantum combs}.

\subsection{Lenses and optics}
\label{sec:org5f20254}
\label{sec:lenses}
An inspiration for these diagrams and the treatment with coends is the lucid account of profunctor optics in functional programming by Riley \cite{riley18}.  The reader may notice that the data for the definition of an optic in a monoidal category coincides with that of a 1-comb; moreover, when discussing \emph{lawful optics} \cite[\S 3]{riley18}, Riley introduces notation that suggests the idea of 0-combs and 2-combs.

A popular example of optics are lenses, pairs of functions named \(\mathrm{view}\colon X_0 \to Y_0\) and \(\mathrm{update} \colon X_0 \times X_1 \to Y_1\). After the diagrams in \cite{riley18}, one can check that the data for a lens in a  cartesian category \(\C\) is exactly that of a 1-comb.
\begin{align*}
& \left\{ [f,g] \in \int^M\C(X_0, Y_0 \times M) \times \C(M \times X_1, Y_1)\ \middle|\ \lensstepone \right\}    \\
\cong & \quad\mbox{(There exists a unique split $f \coloneqq (f_1,f_2)$)} \\
& \left\{ (f_2, [f_1,g]) \in \C(X_0, Y_0) \times \int^{M} \C(X_0, M) \times \C(M \times X_1, Y_1) \ \middle|\ \lenssteptwo \right\} \\
\cong & \quad\mbox{(Yoneda reduction)} \\
& \left\{ (f_2,g \circ (f_1 \times \id)) \in \C(X_0, Y_0) \times \C(X_0 \times X_1, Y_1) \ \middle|\ \lensstepthree \right\}
\end{align*}
However, there is a crucial difference between optics (and in particular, lenses) and 1-combs. Optics come equipped with a given composition rule, namely, that of including one comb inside the other. In other words, the second part of a lens is contravariant. 1-Combs can be composed in at least two ways (see the following diagrams), and the composition of arbitrary finite combs admits a rich combinatorial structure of possible interleavings that we leave as further work.
\[
\combpositionone \quad\mbox{ vs. }\quad \combpositiontwo
\]
Related to this discussion, Spivak \cite[Example 2.5]{spivak19} observes that the data for a dynamical system coincides with that of lenses of a particular shape \((S,S) \to (A,B)\).  The combs we have described could maybe help to further justify this coincidence and the connection with wiring diagrams \cite{schultz16}.

\subsection{Learners}
\label{sec:orge5c3512}
\label{sec:learners}
The work of Fong and Johnson \cite{fong19} proposes a compositional approach to machine learning by exhibiting a monoidal category whose morphisms represent supervised learning algorithms.  A morphism in this category is given by a \emph{learner}, quotiented by a suitable equivalence relation.

\begin{definition}
\cite[Definition 4.1]{fong19} A \textbf{learner} taking inputs on a set \(A\) and producing outputs on a set \(B\) is given by

\begin{itemize}
\item a set of \emph{parameters} \(P\),
\item an \emph{implementation} function \(i \colon P \times A \to B\),
\item an \emph{update} function \(u \colon P \times A \times B \to P\), and
\item a \emph{request} function \(r \colon P \times A \times B \to A\).
\end{itemize}
\end{definition}

The advantage of this definition is that it is very close to our intuition of what a learner should be.  However, in the same article on optics as bidirectional data accessors \cite{riley18}, Riley notices a sharp alternative definition in terms of coends that can be generalized to arbitrary monoidal categories.

\begin{definition}
\cite{riley18} Let \((\C,\otimes,I)\) be a monoidal category. A \textbf{learner} taking inputs on \(A \in \C\) and producing outputs on \(B \in \C\) is an element of the following set represented as a coend.
\[
\Learner(A,B) \coloneqq \int^{P,Q \in \C}\C(P \otimes A, Q \otimes B) \times \C(Q \otimes B, P \otimes A).
\]
\end{definition}

This could be related to a piece of an infinite comb, but a naive embedding of learners into \(\infty\)-combs will fail to be functorial, again because of the contravariant nature of the second part of the learner.   In any case, it is interesting to note how the \(\Circ\) construction (in \S \ref{sec:katis} and \cite[Definition 2.4]{katis:feedback}) seems precisely to be a learner without the contravariant part.  In other words, the data for an element of \(\Circ(\C)\), before considering the necessary quotienting, is given by

\begin{itemize}
\item a set of \emph{parameters} \(P\), and
\item an \emph{implementation-update} function \(i \colon P \otimes A \to P \otimes B\).
\end{itemize}

This contrasts as a simpler description, but the absence of a contravariant part makes it conceptually different.

\section{Conclusions}
\label{sec:orgbe54193}

A category whose morphisms can be used to encode discrete dynamical systems can be constructed from the same ideas that give rise to quantum \emph{combs}, \emph{lenses} and \emph{learners}.  We have not yet related this idea to other notions of discrete dynamical system, nor to other notions of feedback, and thus this work is still at a very early stage.  However, the construction itself seems to be useful to describe examples in a wide range of categories; and it helps explain, in elementary terms, why lenses, learners, and discrete time dynamical systems should be related.  In the context of an increasing interest on optics, we may consider useful to take the time to describe this naive approach, if only to compare it with further developments.

\subsection{Further directions}
\label{sec:org86d9b0f}

\begin{itemize}
\item A crucial next step is to axiomatize the most important properties from this construction and study the universal property of this construction.  We probably would need to axiomatize the properties a fully-faithful strong monoidal pointed delay functor and a feedback operator.

\item We have defined infinite comb diagrams, but diagrams usually only open at the extremes. A naive notion of infinite diagram following the technique we have presented would degenerate into a discrete category due to the strong conditions on the quotient relation. Which other ways of defining infinite diagrams are avaliable? Related to this, the choice of \(\infty\) as a symbol is deliberately ambiguous; the naming scheme for these constructions should be decided after some generalization is proposed. Using the natural numbers as indexing set is purely motivated by our applications, but repeating the reasoning with different totally ordered sets, or even posets, seems promising.

\item A straightforward generalization restricts the category over which we take coends.  We do not need the \(M_1,M_2,\dots\) in the definition of comb to live on \(\C\), but on any category with a strong monoidal functor to \(\C\). Intuitively, this would limit the \emph{memory} or the communication of every process with its future self. Are there interesting applications that are modelled by this kind of limitation?

\item We hope that our diagrammatic description of the similarities and differences between lenses, combs, feedback and learners using coend calculus inspires and helps the intuition on their study. The trace-like feedback structure of the category of learners is mentioned by Fong, Spivak and Tuyéras \cite[\S 7.5]{tuyeras19}, together with the need of a construction that helps on the study of recurrent neural networks.  Can we apply infinite comb diagrams and their feedback operator to the study of recurrent neural networks?

\item Open games \cite{ghani18} make an extensive use of lenses to model the two-stage process of moving and receiving a utility.  How do open games compare to 2-combs? How to handle or rewrite the contravariant part of open games? Can we apply \(\infty\)-combs to the study of repeated games?

\item The structure of infinite combs naturally suggests the idea of dialogue.  In the field of Categorical Compositional Distributional models of meaning (DisCoCat), there is an ongoing proposal \cite{coecke:discocirc} of modelling sentence composition using wires of indefinite length representing how \emph{agents} expand across the \emph{dialogue}. Can we use \(\infty\)-combs to model dialoguing agents in DisCoCat?

\item Signal flow diagrams \cite{bonchi:signalflow}, as described by Bonchi, Sobociński and Zanasi, share properties with \(\infty\)-combs, and their right trace looks close to the feedback operator. In fact, Example \ref{ex:fibonacci} is a repetition of \cite[Example 7.3]{bonchi:signalflow}. What is the precise relation? Can we use \(\infty\)-combs to provide semantics of signal flow diagrams?
\end{itemize}

\section{Acknowledgements}
\label{sec:orgdc1def3}
The author first noticed a connection between combs and lenses thanks to an exposition of the work of Kissinger and Uijlen \cite{uijlen17} by Daphne Wang. The ideas that developed here took inspiration and benefited greatly from discussions with Jules Hedges and Edward Morehouse about the combinatorial structure of the composition of finite combs; and from discussions with Elena Di Lavore on repeated games.

Mario Román was supported by the European Union through the ESF funded Estonian IT Academy research measure (project 2014-2020.4.05.19-0001).

\bibliographystyle{alpha}
\bibliography{latex/bibliography}
\end{document}